\newtheorem{theorem}{Theorem}[section]
\newtheorem{lemma}[theorem]{Lemma}
\theoremstyle{remark}
\theoremstyle{definition}
\newtheorem{definition}{Definition}[section]
\begin{document}
\title{Achieving Secrecy Capacity of Minimum Storage Regenerating Codes for all Feasible $(n,k,d)$ Parameter Values}

\author{\IEEEauthorblockN{V. Arvind Rameshwar}
\IEEEauthorblockA{Department of Electrical Communication Engineering\\
Indian Institute of Science, Bengaluru\\
Email: \texttt{vrameshwar@iisc.ac.in}
}
\and
\IEEEauthorblockN{Navin Kashyap}
\IEEEauthorblockA{Department of Electrical Communication Engineering\\
Indian Institute of Science, Bengaluru\\
Email: \texttt{nkashyap@iisc.ac.in}
}
}
\maketitle
\begin{abstract}
This paper addresses the problem of constructing secure exact-repair regenerating codes at the MSR point for all feasible values of the parameters. The setting involves a passive eavesdropper who is allowed to observe the stored contents of, and the downloads into, an $l$-subset of the $n$ nodes of a distributed storage system (DSS). The objective is to achieve perfect secrecy between the eavesdropped symbols and the file stored on the DSS. Previous secure code constructions (most notably that by Rawat et al.) tackle the problem only for the restricted case wherein the number, $d$, of helper nodes aiding in the recovery of a failed node is equal to $n-1$. This paper builds on Rawat's work, by combining Gabidulin pre-coding and an MSR construction by Ye and Barg to prove the achievability of secrecy capacity at the MSR point for all allowed values of $d$.
\end{abstract}
\IEEEpeerreviewmaketitle
\section{Introduction}

A distributed storage system (DSS) stores a file $\mathbf{f}$ of size $M$ (symbols over a finite field $\mathbb{F}$) on $n$ storage nodes. The system possesses the ``$k$-out-of-$n$" property, in that a data collector (DC) can recover the file by connecting to any $k\text{-}$subset of the nodes. The nodes, however, are prone to failure and the objective is to design schemes that allow for failed-node repair by contacting any $d$ helper nodes, while preserving the ``$k$-out-of-$n$" property. The work by Dimakis et al. \cite{Dimetal} introduced the concept of \textit{regenerating codes}, which address the problem of simultaneous repair and reconstruction while ensuring that each node stores no more than $\alpha$ independent symbols and each helper node passes on no more than $\beta$ independent symbols to the failed node. Then from \cite{Dimetal},
\begin{equation}\label{basic}
    M \leq \sum_{i=1}^{k}\min\{\alpha,(d-i+1)\beta\}.
\end{equation}

The upper bound describes a tradeoff between the parameters $\alpha$ and $\beta$, for a fixed $M$. Two extremal points of this trade-off curve are the minimum storage regeneration (MSR) and the minimum bandwidth regneration (MBR) points. The MSR point, which is of interest to us, is where $\alpha$ is minimized for a given $M$. From \cite{Dimetal} and the tradeoff curve \eqref{basic}, we have
\begin{equation*}
    (\alpha_{MSR},\beta_{MSR}) = \left(\frac{M}{k},\frac{M}{k(d-k+1)}\right).
\end{equation*}

Since MSR codes are equivalent to standard MDS array codes, the goal is to suitably augment MDS array code constructions with repair schemes. MSR codes that meet the capacity upper bound of \eqref{basic} are described in \cite{allparam}, \cite{Rash2}, \cite{vish} and \cite{ye}. In particular, Ye and Barg's constructions in \cite{ye} allow for the parameter $k$ to take on all feasible values (from $1$ to $n$), and similarly, $d$ to take any value in its permissible range of $k+1$ to $n-1$. 

Now consider the \emph{passive eavesdropper} setting, where an eavesdropper, Eve, is allowed to observe, over a long time, the stored contents of, and the downloads into, an $l$-subset of the $n$ nodes. We need to ensure that Eve obtains no information about the file stored in the DSS. 

Capacity upper bounds for \emph{perfect secrecy} at the MSR and MBR points are provided in \cite{Upper}. For the MSR point, work towards tightening the bound in \cite{Upper} can be found in \cite{Gopsecure}, \cite{Huang} and \cite{vish}. While secure codes meeting the capacity upper bound at the MBR point in \cite{Upper} have been constructed for all values of $n,k,d$ \cite{Shah}, the task of constructing secure codes at the MSR point that achieve the improved capacity upper bound in \cite{Gopsecure} has been tackled only for the restricted case of $d=n-1$.

In this work, we provide a secure code construction at the MSR point that achieves the capacity upper bound in \cite{Gopsecure} and \cite{vish} for all values of $n,k,d$, effectively closing the open problem of achieving secrecy capacity at the MSR point. 

In Section \ref{2}, we provide a formal description of the system model and discuss related literature. Section \ref{3} describes the MSR code construction, and provides a proof of secrecy. 
\section{Background and Related Work}\label{2}

In this section, we formally describe the system model, and provide details of Gabidulin-based pre-coding, and an overview of the MSR construction by Ye and Barg \cite{ye}. In what follows, the notation $[a:b]$ denotes the set of integers between $a$ and $b$, both inclusive, i.e., $[a:b] = \{i\in \mathbb{Z}:a\leq i\leq b\}$. We use $[n]$ as shorthand for $[1:n]$.
\subsection{System Model}\label{model}
An $(n,k,d)$ DSS consists of $n$ storage nodes, indexed from $1$ to $n$, that store in a distributed, coded fashion, the $M$ symbols (over a field $\mathbb{F}$) of a file $\mathbf{f}$. The symbols are drawn independently and uniformly at random from the field. 

Let $\mathbf{c}_i$, $i\in [n]$, denote the coded symbols stored in node $i$. Firstly, we require that each node stores no more than $\alpha$ independent symbols. 

We assume that node failures in the system occur in stages, with no more than one failure at any stage. At stage $t$, we say that a node $j$ is \emph{active} if it does not fail in that stage. We operate in the \emph{exact-repair} setting, wherein the downloads from $d$ active helper nodes ($k+1\leq d\leq n-1$) can exactly recover the contents of the failed node. In keeping with \cite{Dimetal}, our second constraint is that the failed node downloads no more than $\beta$ independent symbols from any one helper node. 

Now, suppose that node $i$ has failed. Let $D_{j,i}$ denote the collection of random symbols sent by helper node $j$ to $i$. If $H(X)$ represents the entropy of a random variable $X$, then
\begin{align}
    H(\mathbf{c}_i) &\leq \alpha, \label{e1}\\
    H(D_{j,i}) &\leq \beta. \label{e2}
\end{align}
From \cite{interior}, we know that exact-repair codes that satisfy \eqref{basic} with equality must also satisfy \eqref{e1} and \eqref{e2} with equality. 

Since at the MSR point, $\alpha$ is minimized for a given $M$, we have from \eqref{basic} that $\alpha = M/k$. The minimum value of $\beta$ then is $\beta = \alpha/(d-k+1)$. 
Hence,
\begin{equation*}
    (\alpha,\beta) = \left(\frac{M}{k},\frac{M}{k(d-k+1)}\right).
\end{equation*}

Now consider the case where an eavesdropper, Eve, observes the downloaded symbols into an arbitrary $l\text{-}$subset $\mathscr{E}$ of the nodes. We assume that each node in $\mathscr{E}$ may fail multiple times, and in order to repair the same node over and over again, information is possibly downloaded from different sets of helper nodes. Thus, over time, Eve knows the stored contents of the nodes in $\mathscr{E}$, and has observed repair information for nodes in $\mathscr{E}$ from all nodes not in $\mathscr{E}$. Let the random vector $\mathbf{e}$ denote the symbols observed by Eve. Thus, $\mathbf{e}$ consists of $\mathbf{c}_i$, $i \in \mathscr{E}$, as well as all the $D_{j,i}$, $i \in \mathscr{E}$, $j \notin \mathscr{E}$. If $\mathbf{f}^{(s)}$ ($(s)$ for ``secure") is the file that we desire to store on the DSS, and $M^{(s)}$ is its size, the \emph{perfect secrecy} condition then is: $I(\mathbf{f^{(s)}}; \mathbf{e}) = 0$, where $I(\mathbf{x};\mathbf{y})$ is the mutual information between the random vectors $\mathbf{x}$ and $\mathbf{y}$.

\subsection{Related Work}
The setting of the passive adversary was first discussed in \cite{Upper}, and an upper bound on the secrecy capacity for functional repair was derived to be
\begin{equation*}
    M^{(s)}\leq \sum_{i=l+1}^{k}\min\{\alpha,(d-i+1)\beta\},
\end{equation*}
where $l = |\mathscr{E}|$. Later work by Shah et al. \cite{Shah} employed the Product-Matrix (PM) code construction to design a secure MSR coding scheme that achieved a maximum file size of $(k-l)(\alpha-l\beta)$. This was improved upon in \cite{vish} and \cite{Gopsecure}, wherein the secrecy capacity was shown to be bounded as
\begin{equation}\label{cap}
    M^{(s)}\leq (k-l)(1-1/(d-k+1))^{l}\alpha.
\end{equation}
This upper bound was shown to be achievable in \cite{vish}, for the case $n=d+1$, using the concept of zigzag codes. Another achievability scheme, due to Rawat \cite{Rawat}, uses a construction in Ye and Barg's paper \cite{ye} to show the capacity upper bound in \eqref{cap} being met, again when $n=d+1$. In this paper, we build upon Rawat's work to prove the achievability of the capacity upper bound in \eqref{cap} for \emph{all} feasible values of $d$, using an alternative construction from \cite{ye}.

The tools we need are provided by the work of Huang et al.\ \cite{Huang}. Recall that in a DSS, a given helper node $j$ may in general belong to multiple repair groups (sets of helper nodes) for a given failed node $i$. A distributed storage code operating at the MSR point is said to be \emph{stable} \cite[Definition~7]{Huang} if for each pair of nodes $i$ and $j$, the information downloaded from node $j$ to repair node $i$ is the same across all repair groups for $i$ containing the node $j$. In Lemma~7 of \cite{Huang}, it is shown that for DSSs based on a stable MSR code, the secrecy capacity of an $(n=d+1,k,d)$ DSS is the same as that of an $(n>d+1,k,d)$ DSS, when all other parameters are identical. This result, along with the observations of Rawat \cite{Rawat}, in fact suffices to establish that the construction we describe in Section~\ref{3} achieves the secrecy capacity upper bound in \eqref{cap}. We, however, give a more direct proof, involving some ideas from hypergraph theory that may be of independent interest.

\subsection{Preliminaries}

Given a DSS that can store $M$ symbols when $l=0$, we augment our file of size $M^{(s)}$ with random symbols $\mathbf{r}$, where $\mathbf{r}$ is a random vector of length $R = M-M^{(s)}$. Each random symbol in $\mathbf{r}$ is drawn i.i.d. and uniformly at random from the field $\mathbb{F}$. We shall now describe the ingredients of our construction, namely the Gabidulin pre-coding procedure and the \textit{d-optimal repair} MSR construction (for all parameters $n,k,d$), by Ye and Barg \cite{ye}.\\

\subsubsection{Gabidulin Pre-coding}\label{gab}
Assume that we have an $M$-length vector $\mathbf{m} = (m_1,\ldots,m_M)$, where each $m_i$, $1\leq i\leq M$, is drawn from a finite field $\mathbb{F}$. Let $\mathbb{B}$ be some sub-field of $\mathbb{F}$. Further, let points $y_1,y_2,\ldots,y_M$, be elements of $\mathbb{F}$ that are linearly independent over $\mathbb{B}$ ($\text{dim}_{\mathbb{B}}(\mathbb{F}) \geq M$).

The procedure for Gabidulin coding is:

\begin{itemize}
    \item First, a linearized polynomial $p_{\mathbf{m}}(x)$ is constructed:
    \begin{equation*}
        p_{\mathbf{m}}(x) = \sum_{i=0}^{M-1}m_{i+1}x^{|\mathbb{B}|^{i}}
    \end{equation*}
 \item The polynomial is evaluated at the collection $\mathscr{Y}$ of points $y_1,y_2,\ldots,y_M$, yielding
\begin{equation*}
    p(\mathbf{m},\mathscr{Y}) := (p_m(y_1),p_m(y_2),\ldots,p_m(y_M)).
\end{equation*}
\end{itemize}

\subsubsection{Ye and Barg construction}\label{yeb}
Here, we provide a brief description of the \textit{d-optimal repair} construction.

\textbf{Construction 1:} First we shall introduce some notation: let $s=d-k+1$ and let $\alpha = s^{n}$. Let $\mathbb{F}$ be a finite field of size $|\mathbb{F}|\geq sn$ and let $\{e_a:a\in[0:\alpha-1]\}$ be the standard basis of $\mathbb{F}^{\alpha}$ over $\mathbb{F}$. 

For an integer $a\in [0:\alpha-1]$, let $\mathbf{a} = (a_{n},a_{n-1},\ldots,a_1)$ denote its $s-$ary representation so that $\textbf{a} = \sum_{i=1}^n a_i s^{i-1}$. Suppose that $\{\lambda_{i,j}\}$ for $i\in[n]$ and $j\in [0:s-1]$ are $sn$ distinct elements in $\mathbb{F}$. We define the matrices $A_i$, $i\in [n]$, to be $\alpha \times \alpha$ diagonal matrices with the $(a,a)^{th}$ entry being $\lambda_{i,a_i}$. In other words,
\begin{equation}
    A_i = \sum_{a=0}^{\alpha-1}\lambda_{i,a_i}e_ae_a^{T}.
\end{equation}

We shall construct an $(n-k)\alpha \times n\alpha$ parity check matrix $H$ for the MSR code $\mathscr{C}$ as:
\begin{equation}
    H = \begin{pmatrix}
    I & \ldots & I & I \\
    A_1 & \ldots & A_{n-1} & A_n \\
    A_1^{2} & \ldots & A_{n-1}^{2} & A_n^{2} \\
    \vdots & \ddots & \vdots & \vdots \\
    A_1^{n-k-1} & \ldots & A_{n-1}^{n-k-1} & A_n^{n-k-1}
    \end{pmatrix}
\end{equation}
where $I$ is the $\alpha \times \alpha$ identity matrix.

In \cite{ye}, the authors prove that the code $\mathscr{C}$ obeys the ``$k$-out-of-$n$" property, while storing exactly $\alpha$ independent symbols in each node. In addition, it is proved that the exact-repair requirement of the DSS is also met, ensuring that the contents of any one failed node can be exactly recovered from $\beta = \frac{\alpha}{d-k+1} = \frac{s^n}{s} = s^{n-1}$ symbols from each of $d$ other active nodes.

We shall now describe the repair scheme. Let $\mathbf{c} = \{\mathbf{c}_1,\ldots,\mathbf{c}_n\}$ be a codeword of $\mathscr{C}$. Since $\alpha = s^{n}$, we shall index the symbols in $\mathbf{c}_i$ by $n$-tuples from $[0:s-1]^{n}$. Let $\mathscr{S}=[0:s-1]$ and $\mathscr{S}^{n}=[0:s-1]^{n}$. The symbols in $\mathbf{c}_i$ are indexed by the vectors $\mathbf{a}\in \mathscr{S}^{n}$, in lexicographic order, starting with the vector $\mathbf{0}$ and ending with the vector $\mathbf{z}$ (the $s$-ary representation of $\alpha -1$). In vectorized form, $\mathbf{c}_i$ is the $\alpha\times 1$ column vector given as
\begin{equation*}
    \mathbf{c}_i = (c_{i,\mathbf{0}},\ldots,c_{i,\mathbf{z}})^{T}.
\end{equation*}

For a vector $\mathbf{a}\in \mathscr{S}^{n}$, let $(a_n,a_{n-1},\ldots,a_1)$ be its $s$-ary representation. Now, let $\mathbf{a}(i,u)\in \mathscr{S}^{n}$ be the vector obtained by substituting the symbol $a_i$ in the $s$-ary representation of $\mathbf{a}$, with $u$, for $i\in [n]$ and $u\in [0:s-1]$. Thus,
\begin{equation*}
    \mathbf{a}(i,u) \equiv (a_n,a_{n-1},\ldots,a_{i+1},u,a_{i-1},\ldots,a_1).
\end{equation*}

Assume that node $i$ has failed and hence, $\mathbf{c}_i$ needs to be recovered. Recall that each helper node $j$, $j\neq i$, sends exactly $\beta = s^{n-1}$ independent symbols to node $i$. For some $\mathbf{a} \in \mathscr{S}^{n}$, we define the set $\mathscr{S}^{n}_{\mathbf{a},i}$ as
\begin{equation*}
    \mathscr{S}^{n}_{\mathbf{a},i} := \{\mathbf{a}(i,u):\text{ }u\in \mathscr{S}\}.
\end{equation*}
Note that $|\mathscr{S}^{n}_{\mathbf{a},i}|=s$, for any $\mathbf{a}\in \mathscr{S}^{n}$. Furthermore,
\begin{equation*}
    \bigcup_{\mathbf{a}}\mathscr{S}^{n}_{\mathbf{a},i} = \mathscr{S}^{n}.
\end{equation*}
Thus, there exist $\beta = s^{n-1}$ distinct sets $\mathscr{S}^{n}_{\mathbf{a},i}$, the union of which is the entire set of $s$-ary $n$-tuples. We shall use these $\beta$ distinct sets to index the symbols sent by node $j$ to failed node $i$. 

Now, let $D_{j,i}$ represent the symbols contributed by helper node $j$ towards the repair of node $i$ ($j\neq i$). Thus, from \cite{ye}, $D_{j,i}$ is the row vector of the $\beta$ symbols
\begin{equation}\label{down}
    \mu_{j,i}^{(\mathscr{S}^{n}_{\mathbf{a},i})} = \sum_{u=0}^{s-1}c_{j,\mathbf{a}(i,u)}
\end{equation}
for \emph{distinct} sets $\mathscr{S}^{n}_{\mathbf{a},i}$, $\mathbf{a} \in \mathscr{S}^{n}$. Observe that $\mathscr{C}$ is a stable MSR code, in the sense of \cite[Definition~7]{Huang}. 

\section{Secure MSR codes for all parameters}\label{3}
In this section, we describe our construction of secure MSR codes for all feasible values of $d$, using arguments from \cite{Rawat}.

\textbf{Construction 2:} Consider a file $\mathbf{f^{(s)}}$, that we intend storing on the DSS, of size
\begin{equation}\label{size}
    M^{(s)} = (k-l)(1-1/(d-k+1))^{l}\alpha
\end{equation}
symbols, over a field $\mathbb{F}$. The file size in \eqref{size} meets the secrecy capacity upper bound at the MSR point, derived in \cite{Gopsecure}. As in the Ye and Barg construction in Section \ref{yeb}, we take $\alpha = s^{n}$, where $s=d-k+1$, for $k+1\leq d\leq n-1$. We now describe our coding scheme: 

\begin{enumerate}
    \item \textbf{Gabidulin pre-coding:} To the information set of size $M^{(s)}$, we add $R = M-M^{(s)}$ random symbols (denoted by the vector $\mathbf{r}$), drawn i.i.d. and uniformly from the field $\mathbb{F}$, where $M = k\alpha$. Let this overall message $\mathbf{m} = (\mathbf{f^{(s)}},\mathbf{r})$ be Gabidulin pre-coded by the procedure described in \ref{gab}. Let
    \begin{equation*}
        \mathbf{f} :=  p(\mathbf{m},\mathscr{Y}) = (p_m(y_1),p_m(y_2),\ldots,p_m(y_M)).
            \end{equation*}
    \item \textbf{Ye and Barg encoding:} Let $H$ be the parity check matrix of the Ye and Barg code specified in Construction 1 of Section \ref{yeb}. The $k\alpha \times n\alpha$ generator matrix of the code, $G$, satisfies
    \begin{equation*}
        GH^{T} = \mathbf{0},
    \end{equation*}
    where $\mathbf{0}$ denotes the $k\alpha \times (n-k)\alpha$ zero matrix. The code vector $\mathbf{c} = (\mathbf{c}_1,\ldots,\mathbf{c}_n)$ to be stored in the nodes of the DSS is obtained as
    \begin{equation*}
        \mathbf{c} = \mathbf{f}G \in \mathscr{C},
    \end{equation*}
    where $\mathbf{f}$ is the pre-coded vector from Step 1. The $i^{th}$ node of the DSS stores the vector $\mathbf{c}_i$ of $\alpha$ symbols.
\end{enumerate}
From the discussion in Section \ref{yeb} and from \cite{ye}, we know that the coding scheme described above is MSR and satisfies the exact-repair property for all values of $d$. The proof of secrecy follows next.

\subsection{Proof of secrecy for $k+1\leq d\leq n-1$}
We shall follow the line of argument presented in \cite{Rawat}. Let the set of nodes that Eve eavesdrops on be $\mathscr{E} = \{i_1,i_2,\ldots,i_l\}$. In the worst case, all nodes in $\mathscr{E}$ have failed at least once. Note that, as before, we require $|\mathscr{E}|=l<k$. Further, let $D_{j,\mathscr{E}}$ represent the symbols sent by the $j^{th}$ active node ($j \in \mathscr{D} \subset [n]\setminus \mathscr{E}$, such that $|\mathscr{D}|=d$), for the repair of the nodes in $\mathscr{E}$. Hence,
\begin{equation*}
    D_{j,\mathscr{E}} = [D_{j,i_1} \big \rvert D_{j,i_2} \big \rvert \ldots \big \rvert D_{j,i_l}],
\end{equation*}
where the solid vertical lines represent concatenation.

Without loss of generality, we assume that $\mathscr{E} = \{n-l+1,n-l+2,\ldots,n\}$, for, if otherwise, we can always reorder the nodes prior to the first node failure. To characterize the symbols downloaded by the nodes in $\mathscr{E}$, we make the following definition.

\begin{definition}(Symbol Matrix):
A symbol matrix $P$ corresponding to the repair scheme \eqref{down} is a $0\text{-}1$ matrix of dimension $l\beta \times \alpha$ such that $D_{j,\mathscr{E}}^{T}=P\mathbf{c}_j$ for all $j \in [n]\setminus \mathscr{E}$. 
\end{definition}

In order to explicitly describe the entries of $P$, we require some notation. Recall from Section \ref{2} that $\mathscr{S}^{n}$ represents the set of vectors in $[0:s-1]^{n}$, and that $\mathscr{S}^{n}_{\mathbf{a},i}=\{\mathbf{a}(i,u):\text{ }u\in \mathscr{S}\}$, for $\mathbf{a}\in \mathscr{S}^{n}$. Now, define
\begin{equation*}
    \mathscr{S}_{i\leftarrow *}^{n} := \{(a_n,\ldots,a_1):\text{ } a_i=*,\text{ }a_j\in \mathscr{S}\text{ for }j\neq i\}.
\end{equation*}
Note that for any $i$, $|\mathscr{S}_{i\leftarrow *}^{n}| = s^{n-1}$.

For a vector $\mathbf{a} \in \mathscr{S}^{n}$ (or $\mathscr{S}^{n}_{j\leftarrow *}$ for some $j$), let $\textbf{a}_{\setminus i}$ denote the vector obtained by puncturing $\mathbf{a}$ in its $i^{th}$ coordinate:
\begin{equation*}
    \textbf{a}_{\setminus i} = (a_n,\ldots,a_{i+1},a_{i-1},\ldots,a_1).
\end{equation*}
Now, from the definition of the symbol matrix $P$, we have
\begin{equation}\label{P}
    P = \left[
\begin{array}{ccc}
  \hfill & P_n & \hfill\\
  \hline
  \hfill & P_{n-1} & \hfill\\
  \hline
  \hfill & \vdots & \hfill\\
  \hline
  \hfill & P_{n-l+1} & \hfill
\end{array}
\right]
\end{equation}
where each $P_i$, $i\in [n-l+1:n]$ is a 0-1 matrix of dimensions $\beta\times \alpha$, such that $P_i\mathbf{c}_j = D_{j,i}^{T} = \left(\mu_{j,i}^{(\mathscr{S}^{n}_{\mathbf{0},i})},\ldots,\mu_{j,i}^{(\mathscr{S}^{n}_{\mathbf{z},i})}\right)^{T}$, with $\mu_{j,i}^{(\mathscr{S}^{n}_{\mathbf{a},i})}$ as in \eqref{down}.

We now seek to characterize $P_i$, $i\in [n-l+1:n]$, completely. Let the columns of $P_i$ ($i\in [n-l+1:n]$) be indexed by all the vectors in $\mathscr{S}^{n}$, listed in lexicographic order and let the rows of $P_i$ be indexed by the vectors $\mathbf{b}\in \mathscr{S}_{i\leftarrow *}^{n}$, in lexicographic order. 

From \eqref{down}, we see that the row in $P_i$ indexed by some $\mathbf{b}\in \mathscr{S}_{i\leftarrow *}^{n}$ contains exactly $s$ 1's --- these are in the columns indexed by the vectors $\mathbf{b}(i,u) = (b_n,\ldots,b_{i+1},u,b_{i-1},\ldots,b_1)$, for $u\in [0:s-1]$. All other entries of $P_i$ are 0's. Note that the column indices containing a 1 entry differ in exactly their $i^{th}$ coordinate. Explicitly,
\begin{equation}\label{P_1}
    {[P_i]}_{\mathbf{r},\mathbf{t}} = \begin{cases}
    1,\text{ if }\textbf{t}_{\setminus i} = \textbf{r}_{\setminus i},\\
    0,\text{ otherwise}
    \end{cases}
\end{equation}
for $\mathbf{r}\in \mathscr{S}_{i\leftarrow *}^{n}$ and $\mathbf{t}\in \mathscr{S}^{n}$.

Further, equation \eqref{P} coupled with equation \eqref{P_1} above, implies that each column of $P$ contains exactly $l$ 1's, one in each $P_i$.

Equations \eqref{P} and \eqref{P_1} completely characterize $P$. We add that $H(D_{j,\mathscr{E}}) = \text{rank}(P)$, since the symbols in $\mathbf{c}_j$ are independent of one another.

As an example, consider the $(n,k,d,l) = (4,2,3,1)$ DSS wherein Eve eavesdrops on the last (fourth) node. The symbol matrix $P$ in this case is:
\begin{equation*}
\begingroup 
\setlength\arraycolsep{2pt}
    P = \begin{bmatrix}
    1 & 0 & 0 & 0 & 0 & 0 & 0 & 0 & 1 & 0 & 0 & 0 & 0 & 0 & 0 & 0\\
    0 & 1 & 0 & 0 & 0 & 0 & 0 & 0 & 0 & 1 & 0 & 0 & 0 & 0 & 0 & 0\\
    0 & 0 & 1 & 0 & 0 & 0 & 0 & 0 & 0 & 0 & 1 & 0 & 0 & 0 & 0 & 0\\
    0 & 0 & 0 & 1 & 0 & 0 & 0 & 0 & 0 & 0 & 0 & 1 & 0 & 0 & 0 & 0\\
    0 & 0 & 0 & 0 & 1 & 0 & 0 & 0 & 0 & 0 & 0 & 0 & 1 & 0 & 0 & 0\\
    0 & 0 & 0 & 0 & 0 & 1 & 0 & 0 & 0 & 0 & 0 & 0 & 0 & 1 & 0 & 0\\
    0 & 0 & 0 & 0 & 0 & 0 & 1 & 0 & 0 & 0 & 0 & 0 & 0 & 0 & 1 & 0\\
    0 & 0 & 0 & 0 & 0 & 0 & 0 & 1 & 0 & 0 & 0 & 0 & 0 & 0 & 0 & 1\\
    \end{bmatrix}.
\endgroup
\end{equation*}
It is easy to verify that rank$(P)$ above is 8, which in turn is equal to $H(D_{j,\mathscr{E}})$.

We intend to obtain a handle on the rank of $P$, in general. To this end, we claim that the following theorem holds true:
\begin{theorem}\label{s}
For $s=d-k+1\geq 2$,
\begin{equation}
    H(D_{j,\mathscr{E}}) = s^{n-l}(s^{l}-(s-1)^{l}).
\end{equation}
In other words, the rank of the symbol matrix $P$ is $s^{n-l}(s^{l}-(s-1)^{l})$.
\end{theorem}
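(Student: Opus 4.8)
The plan is to recognize each block $P_i$ as a Kronecker-product operator and to compute $\mathrm{rank}(P)$ through its null space. Identifying $\mathbb{F}^{\alpha}=\mathbb{F}^{s^{n}}$ with the tensor power $(\mathbb{F}^{s})^{\otimes n}$ via $e_{\mathbf{a}}=e_{a_n}\otimes\cdots\otimes e_{a_1}$, equation \eqref{P_1} says that $P_i$ leaves every coordinate except the $i$-th untouched and sums over the $i$-th, i.e.\ $P_i=I_s\otimes\cdots\otimes\mathbf{1}_s^{T}\otimes\cdots\otimes I_s$, with the all-ones row vector $\mathbf{1}_s^{T}$ in the $i$-th slot. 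Since $\mathrm{rank}(P)=\alpha-\dim\ker(P)$ and $\ker(P)=\bigcap_{i\in\mathscr{E}}\ker(P_i)$ by \eqref{P}, it suffices to compute the dimension of this intersection.

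First I would record that, because every factor of $P_i$ other than the $i$-th is invertible, $\ker(P_i)$ is exactly the subspace in which the $i$-th tensor factor lies in $V_0:=\ker(\mathbf{1}_s^{T})=\{v\in\mathbb{F}^{s}:\sum_a v_a=0\}$ while the remaining factors are unconstrained; in particular $\dim\ker(P_i)=(s-1)\,s^{n-1}$. To handle the simultaneous intersection cleanly, I would fix a single splitting $\mathbb{F}^{s}=V_0\oplus V_1$ with $\dim V_1=1$ and $\mathbf{1}_s^{T}$ nonzero on $V_1$, and pass to the induced grading $(\mathbb{F}^{s})^{\otimes n}=\bigoplus_{\epsilon\in\{0,1\}^{n}}W_\epsilon$, where $W_\epsilon=V_{\epsilon_n}\otimes\cdots\otimes V_{\epsilon_1}$ and $\dim W_\epsilon=(s-1)^{\#\{j:\epsilon_j=0\}}$.

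The key step is that every $P_i$ respects this common grading: $P_i$ annihilates $W_\epsilon$ when $\epsilon_i=0$ and acts injectively on it when $\epsilon_i=1$, so that $\ker(P_i)=\bigoplus_{\epsilon:\epsilon_i=0}W_\epsilon$ (equivalently, just distribute the tensor product over $V_0\oplus V_1$ in each of the free factors). Because the $W_\epsilon$ are independent summands, intersecting over $i\in\mathscr{E}=\{n-l+1,\ldots,n\}$ merely intersects the index sets, giving $\ker(P)=\bigoplus_{\epsilon:\,\epsilon_i=0\ \forall i\in\mathscr{E}}W_\epsilon$. Summing $\dim W_\epsilon$ over this range, the $l$ eavesdropped coordinates each contribute a forced factor $(s-1)$ while each of the remaining $n-l$ coordinates contributes $(s-1)+1=s$, so $\dim\ker(P)=(s-1)^{l}s^{n-l}$. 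Subtracting from $\alpha=s^{n}$ yields $\mathrm{rank}(P)=s^{n-l}\big(s^{l}-(s-1)^{l}\big)$, which is the claimed value of $H(D_{j,\mathscr{E}})$.

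I expect the one genuinely load-bearing point to be the compatibility claim that a single fixed splitting $V_0\oplus V_1$ simultaneously diagonalizes all the $P_i$, $i\in\mathscr{E}$, so that the intersection of kernels collapses to an intersection of index sets rather than requiring a delicate count of overlapping subspaces; once this is in place the arithmetic is elementary binomial bookkeeping (indeed $s^{l}-(s-1)^{l}$ is precisely what inclusion--exclusion on the $l$ summing constraints produces). A purely combinatorial alternative---reading the $1$'s of $P$ as incidences of a hypergraph and counting independent rows directly---reaches the same total and is presumably the hypergraph-flavoured route hinted at earlier, but the Kronecker/grading argument seems both shorter and more transparent.
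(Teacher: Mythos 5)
Your proof is correct, and it reaches the rank formula by a genuinely different route than the paper, even though both exploit the same underlying tensor structure. The paper proceeds combinatorially: it reads $P^{T}$ as the incidence matrix of a sub-hypergraph of $\mathscr{G}_{s,n}$, shows (Lemma \ref{extra}) that this hypergraph splits into $s^{n-l}$ connected components each isomorphic to $\mathscr{G}_{s,l}$, identifies the incidence matrix of a component with the parity-check matrix $H$ of the product code $\mathscr{C}_{s}^{\otimes l}$ of single-parity-check codes (Lemma \ref{kro}), concludes $\operatorname{rank}(H)=s^{l}-\operatorname{rank}(G_s^{\otimes l})=s^{l}-(s-1)^{l}$, and sums over components using disjointness of row supports. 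You bypass the component decomposition entirely: writing $P_i=I_s\otimes\cdots\otimes\mathbf{1}_s^{T}\otimes\cdots\otimes I_s$ and computing $\ker P=\bigcap_{i\in\mathscr{E}}\ker P_i$ through the common grading $\bigoplus_{\epsilon}W_\epsilon$. Your load-bearing step checks out: $P_i$ annihilates $W_\epsilon$ when $\epsilon_i=0$ and is injective on $W_\epsilon$ when $\epsilon_i=1$ (and the images of distinct $W_\epsilon$ with $\epsilon_i=1$ remain independent, since they sit in distinct summands of the induced grading of $(\mathbb{F}^{s})^{\otimes(n-1)}$, which is what makes $\ker P_i$ \emph{exactly} $\bigoplus_{\epsilon:\epsilon_i=0}W_\epsilon$), so intersecting kernels does collapse to intersecting index sets, and $\dim\ker P=(s-1)^{l}s^{n-l}$ follows. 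Your route buys two things: first, it is self-contained where the paper's Lemma \ref{kro} implicitly invokes the standard fact that the axis-parallel parity checks cut out exactly the product code (rows of $H$ obviously annihilate $\mathscr{C}_s^{\otimes l}$, giving only $\operatorname{rank}(H)\le s^{l}-(s-1)^{l}$; equality needs $\ker H=\mathscr{C}_s^{\otimes l}$, which your grading proves from scratch); second, your formulation of the splitting --- any complement $V_1$ of the hyperplane $V_0=\ker(\mathbf{1}_s^{T})$ --- is correctly characteristic-free, whereas the natural-looking choice $V_1=\operatorname{span}(\mathbf{1}_s)$ would fail when $\operatorname{char}(\mathbb{F})$ divides $s$, a case the paper's assumption $|\mathbb{F}|\ge sn$ does not exclude. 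What the paper's route buys in exchange is the explicit block-diagonal picture (after permutation, $P^{T}$ consists of $s^{n-l}$ identical blocks) and the hypergraph-component language the authors flag as of independent interest; indeed your kernel is precisely the direct sum, over those $s^{n-l}$ components, of one copy of the $(s-1)^{l}$-dimensional product code, so the two computations agree term by term. One bookkeeping note: the theorem also asserts $H(D_{j,\mathscr{E}})=\operatorname{rank}(P)$, which you take as given; the paper establishes this identity just before the theorem statement (the coordinates of $\mathbf{c}_j$ are independent), so that reliance is legitimate.
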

We shall defer the proof of Theorem \ref{s} until later, and prove the following theorem, based on the validity of Theorem \ref{s}.

\begin{theorem}\label{sec}
The coding scheme of Construction 2 is secure, for $k+1\leq d\leq n-1$, against a passive eavesdropper that has access to a set $\mathscr{E}\subset [n]$ of nodes, with $|\mathscr{E}|=l$.
\end{theorem}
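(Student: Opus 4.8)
The plan is to follow the Gabidulin pre-coding argument of \cite{Rawat}: I would first reduce the secrecy condition $I(\mathbf{f^{(s)}};\mathbf{e})=0$ to a bound on the number of field symbols Eve can independently observe, and then invoke the secrecy property of Gabidulin (MRD) codes. Since $\mathbf{c}=\mathbf{f}G$ and every entry of $\mathbf{e}$ is an $\mathbb{F}$-linear combination of the stored symbols $\mathbf{c}_i$, $i\in\mathscr{E}$, and the downloads $D_{j,i}$, the observation $\mathbf{e}$ is an $\mathbb{F}$-linear image of the pre-coded vector $\mathbf{f}=p(\mathbf{m},\mathscr{Y})$. Because $\mathbf{m}=(\mathbf{f^{(s)}},\mathbf{r})$ is padded with exactly $R=M-M^{(s)}$ i.i.d.\ uniform symbols and the evaluation points satisfy $\dim_{\mathbb{B}}(\mathbb{F})\geq M$ (Section \ref{gab}), the MRD structure guarantees that $\mathbf{f^{(s)}}$ is statistically independent of any collection of at most $R$ linearly independent $\mathbb{F}$-combinations of the entries of $\mathbf{f}$. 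Hence it suffices to show $H(\mathbf{e})\leq R$; in fact I expect the equality $H(\mathbf{e})=R$, which simultaneously certifies that Construction 2 operates at the file size \eqref{size}.

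To evaluate $H(\mathbf{e})$, I would apply the chain rule to write
\[
H(\mathbf{e}) = H(\{\mathbf{c}_i\}_{i\in\mathscr{E}}) + H\big(\{D_{j,\mathscr{E}}\}_{j\notin\mathscr{E}} \,\big|\, \{\mathbf{c}_i\}_{i\in\mathscr{E}}\big).
\]
Since $l<k$ and $\mathscr{C}$ is MDS, the $l\alpha$ stored symbols of $\mathscr{E}$ are mutually independent, giving the first term $l\alpha$. The target for the conditional term is $(k-l)\,\mathrm{rank}(P)$, where $\mathrm{rank}(P)=s^{n-l}(s^{l}-(s-1)^{l})$ is supplied by Theorem \ref{s}; indeed a direct substitution verifies $l\alpha+(k-l)\,\mathrm{rank}(P)=R$, so pinning down this conditional entropy finishes the proof.

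The lower bound is routine: picking any $\mathscr{F}\subseteq[n]\setminus\mathscr{E}$ with $|\mathscr{F}|=k-l$, the MDS property makes $\{\mathbf{c}_j\}_{j\in\mathscr{F}}$ free and mutually independent given $\mathbf{c}_{\mathscr{E}}$, so the downloads $\{D_{j,\mathscr{E}}\}_{j\in\mathscr{F}}$ (each of entropy $\mathrm{rank}(P)$ by Theorem \ref{s}) already contribute $(k-l)\,\mathrm{rank}(P)$ independent symbols. The matching upper bound is the main obstacle, because Eve sees downloads from all $n-l$ external helpers, and the trivial bound $l\alpha+(k-l)\alpha=k\alpha$ (equal to $32$ in the $(4,2,3,1)$ example) badly overshoots $R=24$. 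The mechanism that kills the excess is repair consistency: given $\mathbf{c}_{\mathscr{E}}$, Eve can also compute the internal downloads $\{D_{j,i}\}_{j\in\mathscr{E}}$, and for each $i\in\mathscr{E}$ the MSR exact-repair property over-determines the already-known $\mathbf{c}_i$ from the $n-1$ available downloads (since $d\beta\geq\alpha$). The resulting linear relations, together with the fixed value of $\mathbf{c}_{\mathscr{E}}$, confine $\{D_{j,\mathscr{E}}\}_{j\notin\mathscr{E}}$ to an affine subspace, whose dimension I would show equals exactly $(k-l)\,\mathrm{rank}(P)$, so that the helpers outside $\mathscr{F}$ add nothing new. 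Carrying out this dimension count uniformly in $l$ — tracking how the $l$ reconstruction maps interact with the $0$-$1$ structure of $P$ in \eqref{P_1}, and verifying that enough external helpers are available for each $i$ — is the technical heart of the argument.

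Finally, with $H(\mathbf{e})=R$ established, the Gabidulin secrecy property from the first paragraph yields $I(\mathbf{f^{(s)}};\mathbf{e})=0$, proving that Construction 2 is secure for every $k+1\leq d\leq n-1$.
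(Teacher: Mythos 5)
Your skeleton matches the paper's in outline — reduce secrecy to a bound on $H(\mathbf{e})$ via the Gabidulin pre-coding, decompose $H(\mathbf{e})$ into $l\alpha$ plus a conditional download entropy, invoke Theorem \ref{s}, and verify the arithmetic identity $l\alpha + (k-l)\,\mathrm{rank}(P) = M - M^{(s)}$ — but the step you yourself label ``the technical heart'' is exactly the step you never carry out, and it is the only nontrivial content of the theorem beyond Theorem \ref{s}. The paper closes it cheaply: since the Ye--Barg code is \emph{stable} (each $D_{j,i} = P_i\mathbf{c}_j$ is a fixed function of $\mathbf{c}_j$ alone, independent of the repair group), the proof of Lemma~7 of \cite{Huang} gives, for each $i\in\mathscr{E}$, $H\bigl(\bigcup_{j\notin\mathscr{T}\cup\mathscr{E}}\{D_{j,i}\} \mid \mathbf{c}_{\mathscr{E}},\bigcup_{j\in\mathscr{T}}\{D_{j,i}\}\bigr)=0$, which is the equality \eqref{second}; concretely, summing the parity checks over a repair class $\mathscr{S}^{n}_{\mathbf{a},i}$ yields, for $t=0,\ldots,n-k-1$, relations $\sum_{u}\lambda_{i,u}^{t}c_{i,\mathbf{a}(i,u)} + \sum_{j\neq i}\lambda_{j,a_j}^{t}\mu_{j,i}^{(\mathscr{S}^{n}_{\mathbf{a},i})} = 0$, and once $\mathbf{c}_i$ and the $k-1$ downloads from $\mathscr{T}\cup(\mathscr{E}\setminus\{i\})$ are known, the $n-k$ remaining downloads per class solve an invertible Vandermonde system. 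Your sketched mechanism is both vaguer and, as stated, off: you speak of the ``$n-1$ available downloads'' over-determining the known $\mathbf{c}_i$, but given $\mathbf{c}_{\mathscr{E}}$ only $k-1$ of those downloads are available — the other $n-k$ are precisely the unknowns to be eliminated — and the generic MSR exact-repair property alone does not suffice (without stability, $D_{j,i}$ may vary with the repair group, and no consistency relation across helper sets exists). Moreover, your dimension count cannot be done node-by-node: the naive per-$i$ count gives an affine subspace of dimension $l(k-l)\beta$, which strictly exceeds your target $(k-l)\,\mathrm{rank}(P)$ whenever $l\geq 2$, so you would additionally have to re-derive the cross-$i$ dependencies inside each $D_{j,\mathscr{E}}$ — effectively redoing Theorem \ref{s} within the subspace argument. (Your lower bound $H(\mathbf{e})\geq R$ is not needed by either route.)

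Separately, your reduction is misstated over the wrong field: it is \emph{false} that $\mathbf{f^{(s)}}$ is independent of any $R$ linearly independent $\mathbb{F}$-combinations of the entries of $\mathbf{f}=p(\mathbf{m},\mathscr{Y})$. For instance, with $M=2$, $M^{(s)}=R=1$ and $p_{\mathbf{m}}(x)=fx+rx^{|\mathbb{B}|}$, the single combination $\gamma_1 p_{\mathbf{m}}(y_1)+\gamma_2 p_{\mathbf{m}}(y_2)$ with $\gamma_1,\gamma_2\in\mathbb{F}$ chosen so that $\gamma_1 y_1^{|\mathbb{B}|}+\gamma_2 y_2^{|\mathbb{B}|}=0$ equals $f(\gamma_1 y_1+\gamma_2 y_2)$ and reveals the secret outright. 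What is true — and what the construction actually delivers, because the Ye--Barg encoding and the $0$-$1$ repair coefficients in \eqref{down} are linear over the subfield $\mathbb{B}$ — is independence from observations that are $\mathbb{B}$-linear in $\mathbf{f}$, i.e., evaluations of the linearized polynomial at points of $\mathbb{B}$-rank at most $R$ (a Moore-matrix argument). The paper instead avoids this subtlety by using the two-condition secrecy lemma of \cite{Shah}, establishing $H(\mathbf{r}\mid\mathbf{e},\mathbf{f^{(s)}})=0$ via the argument of Proposition~1 of \cite{Rawat} — a condition your write-up drops entirely. So: right architecture and correct arithmetic, but the decisive conditioning step \eqref{second} is missing, and the Gabidulin step needs the field of linearity repaired.
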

\begin{proof}
The proof of the theorem is similar to the proof of Proposition 1 in \cite{Rawat}. We intend showing that $H(\mathbf{e})\leq H(\mathbf{r})$ and $H(\mathbf{r}|\mathbf{e},\mathbf{f^{(s)}}) = 0$, thereby implying (from the perfect secrecy lemma of \cite{Shah}) that $I(\mathbf{f^{(s)}};\mathbf{e})=0$. Let $\mathscr{T}$ represent a group of $k-l$ nodes such that $\mathscr{T}\cap \mathscr{E} = \emptyset$. We know that 
\begin{equation*}
    \mathbf{e} = (\mathbf{c}_i:i \in \mathscr{E}) \cup \biggl( \bigcup_{i\in \mathscr{E}} \bigcup_{j \in [n] \setminus \mathscr{E}} \{D_{j,i}\}\biggr).
\end{equation*}
Now, using the notation $\mathbf{c}_{\mathscr{E}} := (\mathbf{c}_i:i \in \mathscr{E})$, we have
\begin{align}
    H(\mathbf{e}) &= l \alpha + H\biggl(\bigcup_{i\in \mathscr{E}} \bigcup_{j \in [n] \setminus \mathscr{E}} \{D_{j,i}\} \ \bigg | \  \mathbf{c}_{\mathscr{E}} \biggr) \notag \\ 
    &= l\alpha + H\biggl(\bigcup_{i\in \mathscr{E}} \bigcup_{j \in \mathscr{T}} \{D_{j,i}\} \ \bigg | \  \mathbf{c}_{\mathscr{E}}  \biggr)  \label{second} \\
    &\leq l\alpha + H\biggl(\bigcup_{i\in \mathscr{E}} \bigcup_{j \in \mathscr{T}} \{D_{j,i}\} \biggr) \notag \\
    &\leq l\alpha + \sum_{j\in \mathscr{T}}H(D_{j,\mathscr{E}}) \notag\\
    &= ls^{n} + (k-l)(1-(1-1/s)^{l})s^{n} \notag\\
    &= ks^{n} - (k-l)(1-1/s)^{l}s^{n} \notag\\
    &= M - M^{(s)} = H(\mathbf{r}). \notag
\end{align}
The equality in \eqref{second} follows from the fact that
\begin{align}
    H & \biggl(\bigcup_{i\in \mathscr{E}} \bigcup_{j \in [n] \setminus (\mathscr{T} \cup \mathscr{E})} \{D_{j,i}\} \ \bigg | \  \mathbf{c}_{\mathscr{E}},  \bigcup_{i\in \mathscr{E}} \bigcup_{j \in \mathscr{T}} \{D_{j,i}\}\biggr) \notag \\
    & \le \sum_{i \in \mathscr{E}} H\biggl(\bigcup_{j \in [n] \setminus (\mathscr{T} \cup \mathscr{E})} \{D_{j,i}\} \ \bigg | \  \mathbf{c}_{\mathscr{E}},  \bigcup_{j \in \mathscr{T}} \{D_{j,i}\}\biggr) \label{third} \\
    & = 0, \notag
\end{align}
the last equality holding since each summand in \eqref{third} equals $0$ (from the proof of Lemma~7 in \cite{Huang}).

Using the MDS array property of the Ye and Barg code and from Remark 8 of \cite{vish}, it is possible to show that $H(\mathbf{r}|\mathbf{e},\mathbf{f^{(s)}}) = 0$. We refer the reader to the proof of Proposition 1 in \cite{Rawat}, for more details.

Now, from the perfect secrecy lemma in \cite{Shah}, the two conditions above imply that $I(\mathbf{f^{(s)}};\mathbf{e})=0$, thereby proving that perfect secrecy holds.
\end{proof}

We shall now proceed to the proof of Theorem \ref{s}, beginning with the definitions of a few notions related to hypergraphs.

\begin{definition}(Incidence matrix)
The incidence matrix (or vertex-edge incidence matrix) $V$ of a hypergraph $(X,E)$ is a 0-1 matrix of dimension $|V|\times |E|$, with the rows representing nodes and columns representing hyperedges, such that $V_{i,j} = 1$ if edge $j$ is incident on vertex $i$, and $0$ otherwise.
\end{definition}

For a vector $\mathbf{v}$, we define its support to be the set of coordinates in which $\mathbf{v}$ takes on non-zero values. 

\begin{definition}(Connected hypergraph)
A hypergraph $(X,E)$ is said to be connected, if for every pair of nodes $(u,w)\in X\times X$, $u\neq w$, there exists an alternating sequence of nodes and hyperedges, $v_0,h_0,v_1,h_1,\ldots,v_{m-1},h_{m-1},v_m$, ($m\in \mathbb{Z}_+$) with $v_0=u$ and $v_m=w$, such that for $i\in [0:m-1]$, $h_i$ is incident on both $v_{i}$ and $v_{i+1}$. We call the sequence of hyperedges $h_0,h_1,\ldots,h_{m-1}$ as a \emph{path} from $u$ to $w$.
\end{definition}

Now, we denote by $\mathscr{G}_{s,n}$, the $n$-dimensional regular hypergraph $(X,E)$ with $|X|=s^{n}$ and $E\subset X^{s}$, with incidence matrix $V_{\mathscr{G}_{s,n}}$ defined as follows: let the rows of $V_{\mathscr{G}_{s,n}}$ be indexed by all the vectors in $\mathscr{S}^{n}$, listed in lexicographic order. Further, let the columns of $V_{\mathscr{G}_{s,n}}$ be indexed by the vectors $\mathbf{b}\in \mathscr{S}_{i\leftarrow *}^{n}$, ($i$ ranging from $n$ down to 1), where for any $i$, the vectors $\mathbf{b}$ are listed in lexicographic fashion. Hence, the first $s^{n-1}$ columns of $V_{\mathscr{G}_{s,n}}$ are indexed in lexicographic order by vectors in $\mathscr{S}_{n\leftarrow *}^{n}$, the next $s^{n-1}$ columns are indexed by vectors in $\mathscr{S}_{(n-1)\leftarrow *}^{n}$, and so on. Thus, there are $ns^{n-1}$ columns overall. The entries of $V_{\mathscr{G}_{s,n}}$ are

\begin{equation}\label{V}
    {[V_{\mathscr{G}_{s,n}}]}_{\mathbf{r},\mathbf{t}} = \begin{cases}
    1,\text{ if }\textbf{r}_{\setminus i} = \textbf{t}_{\setminus i}\text{ and }t_i=*\\
    0,\text{ otherwise}
    \end{cases}
\end{equation}
where $\mathbf{r}\in \mathscr{S}^{n}$ and $\mathbf{t}\in \mathscr{S}_{i\leftarrow *}^{n}$, $i\in [n]$. 

The column in $V_{\mathscr{G}_{s,n}}$ indexed by the vector $\mathbf{b} \in \mathscr{S}_{i\leftarrow *}^{n}$ for some $i$, has exactly $s$ 1's in precisely those rows $\mathbf{t}$ for which $t_i=u$, $u\in [0:s-1]$ and $t_j=b_j$, for $j\neq i$. Moreover, each row of $V_{\mathscr{G}_{s,n}}$ has exactly $n$ 1's.

With the aid of the definitions above, we wish to prove Theorem \ref{s}. Two lemmas follow. The first (Lemma~\ref{extra}) establishes that the transpose of the symbol matrix $P$ can be thought of as the incidence matrix of a regular hypergraph having exactly $s^{n-l}$ connected components. Our second lemma (Lemma~\ref{kro}) proves that the rank of the incidence matrix corresponding to each of these connected components is $(s^{l}-(s-1)^{l})$. We conclude by showing that the rank of $P$ is precisely the sum of the ranks of the incidence matrices of these connected components.

\begin{lemma}\label{extra}
$P^{T}$ is the incidence matrix of a subgraph $\mathscr{H}$ of $\mathscr{G}_{s,n}$. Further, the number of connected components in $\mathscr{H}$ is $s^{n-l}$.
\end{lemma}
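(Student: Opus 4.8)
The plan is to verify the incidence-matrix identification by a direct comparison of matrix entries, and then to pin down the connected components through the invariance of the ``frozen'' coordinates along any path in $\mathscr{H}$.

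For the first claim, I would note that $P^{T}$ has rows indexed by $\mathbf{t}\in\mathscr{S}^{n}$ and columns indexed by the pairs $(i,\mathbf{b})$ with $i\in[n-l+1:n]$ and $\mathbf{b}\in\mathscr{S}^{n}_{i\leftarrow *}$, arising from the blocks $P_i$ in \eqref{P}. By \eqref{P_1}, $[P^{T}]_{\mathbf{t},(i,\mathbf{b})}=[P_i]_{\mathbf{b},\mathbf{t}}=1$ precisely when $\mathbf{b}_{\setminus i}=\mathbf{t}_{\setminus i}$. Comparing with \eqref{V}, this is exactly the restriction of $V_{\mathscr{G}_{s,n}}$ to the columns indexed by $\mathbf{b}\in\mathscr{S}^{n}_{i\leftarrow *}$ with $i\in\mathscr{E}=[n-l+1:n]$. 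Hence $P^{T}$ is the incidence matrix of the subhypergraph $\mathscr{H}$ of $\mathscr{G}_{s,n}$ on the full vertex set $\mathscr{S}^{n}$ whose hyperedges are exactly those of $\mathscr{G}_{s,n}$ carrying an index $i\in\mathscr{E}$. This step is essentially definitional.

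For the count of components, the key observation is that every hyperedge of $\mathscr{H}$ has an index $i\in\mathscr{E}$, and by \eqref{V} such a hyperedge joins only vertices agreeing in all coordinates except (possibly) the $i$-th. Since $\mathscr{E}\cap[1:n-l]=\emptyset$, any single hyperedge of $\mathscr{H}$ fixes every coordinate in $[1:n-l]$; by induction along a path, any two vertices in the same connected component must agree on all coordinates in $[1:n-l]$. This shows there are at least $s^{n-l}$ components, one for each assignment of the $n-l$ frozen coordinates. For the reverse inclusion, I would show that any two vertices $\mathbf{t},\mathbf{t}'$ agreeing on $[1:n-l]$ are connected: altering the $i$-th coordinate (for $i\in\mathscr{E}$) from its current value to $t_i'$ while fixing all others moves between two vertices differing only in coordinate $i$, which therefore lie on the common hyperedge indexed by the vector of $\mathscr{S}^{n}_{i\leftarrow *}$ agreeing with both off coordinate $i$. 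Concatenating at most $l$ such single-coordinate moves yields a path from $\mathbf{t}$ to $\mathbf{t}'$, so each of the $s^{n-l}$ coordinate classes is a single component, giving exactly $s^{n-l}$ components.

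The only mildly delicate point is the invariance-along-a-path argument, since hyperedges with different indices $i\in\mathscr{E}$ are interleaved along a general path. This causes no real difficulty: because each individual hyperedge preserves every coordinate outside its own index $i$, and all such indices lie in $\mathscr{E}$, the coordinates in $[1:n-l]$ are preserved edge-by-edge, so the induction goes through cleanly. I therefore expect no serious obstacle; once the incidence-matrix identification is in place, the remainder is a clean invariant/orbit computation.
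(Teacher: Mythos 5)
Your proof is correct, and its first half coincides with the paper's: both identify $P^{T}$ entrywise, via \eqref{P_1} and \eqref{V}, with the submatrix of $V_{\mathscr{G}_{s,n}}$ formed by the columns indexed by $\mathbf{b}\in\mathscr{S}^{n}_{i\leftarrow *}$, $i\in[n-l+1:n]$. You diverge in how the component count is completed. The paper fixes a vertex $\mathbf{v}$, shows by a disjoint-support argument that no hyperedge leaves the class $\mathscr{W}$ of vertices agreeing with $\mathbf{v}$ on coordinates $[1:n-l]$, identifies the sub-hypergraph induced on $\mathscr{W}$ with $\mathscr{G}_{s,l}$, and then obtains the count \emph{indirectly} by edge counting: each component has $ls^{l-1}$ hyperedges and the total number of hyperedges is $ls^{n-1}$ by \eqref{conn}, hence there are $s^{n-l}$ components. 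You instead count components \emph{directly} as the equivalence classes of ``agreement on the frozen coordinates $[1:n-l]$'': path-invariance of those coordinates shows every component lies inside a class, and your single-coordinate-move argument shows each class is connected, so components and classes coincide and there are exactly $s^{n-l}$ classes. Your route buys two things: it dispenses with the edge-counting identity \eqref{conn} altogether, and it is self-contained precisely where the paper is terse --- the paper's disjoint-support argument only shows that the component containing $\mathbf{v}$ is \emph{contained in} $\mathscr{W}$, while the connectivity of $\mathscr{W}$ is dismissed as ``easy to verify,'' and that is exactly the step you supply. What the paper's route buys in exchange is the explicit identification of each component with $\mathscr{G}_{s,l}$, which is reused later (the proof of Theorem~\ref{s} feeds each component's incidence matrix into Lemma~\ref{kro}); if your argument were to replace the paper's, you should still record that each class induces precisely the hyperedge structure of $\mathscr{G}_{s,l}$ on its $s^{l}$ vertices --- your moves establish this implicitly, but Theorem~\ref{s} needs it stated.
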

\begin{proof}
Recall that the symbol matrix $P$ has a block matrix form, given in equation \eqref{P}. Taking the transpose of each $P_i$, $i\in [n-l+1:n]$ thus gives us:
\begin{equation}\label{PT}
    {[P_i^{T}]}_{\mathbf{r},\mathbf{t}} = \begin{cases}
    1,\text{ if }\textbf{t}_{\setminus i} = \textbf{r}_{\setminus i},\\
    0,\text{ otherwise}
    \end{cases}
\end{equation}
where $\mathbf{r}\in \mathscr{S}^{n}$ and $\mathbf{t}\in \mathscr{S}_{i\leftarrow *}^{n}$.

Since $i$ ranges from $n-l+1$ to $n$ only, by comparing the equation above with \eqref{V}, we get that $P^{T}$ is a submatrix of $V_{\mathscr{G}_{s,n}}$, containing only the first $ls^{n-1}$ columns of $V_{\mathscr{G}_{s,n}}$. We denote by $\mathscr{H}$, the subgraph induced by this submatrix; thus, $\mathscr{H}$ is a sub-hypergraph of $\mathscr{G}_{s,n}$.

Now, let $t$ be the number of connected components in $\mathscr{H}$ and let $h_i$, $1\leq i\leq t$, be the number of hyperedges in each component $\mathscr{H}_i$. Then,
\begin{equation}\label{conn}
    \sum_{i=1}^{t}h_i = l\beta = ls^{n-1}.
\end{equation}
Pick some node indexed by vector $\mathbf{v}$ so that $\mathbf{v}$ belongs to connected component $\mathscr{H}_j$. Consider the collection of nodes $\mathscr{W}$, where $\mathscr{W} = \{\mathbf{w}\in \mathscr{S}^{n}:\text{ }w_i=v_i,\text{ }i\in [n-l]\text{ and } w_j\in [0:s-1], j\in [n-l+1:n]\}$. Note that the set $\mathscr{W}$ includes the node $\mathbf{v}$. From equation \eqref{P_1}, it is easy to verify that the sub-hypergraph of $\mathscr{H}$ induced by the nodes in $\mathscr{W}$ forms the connected component $\mathscr{H}_j$. This can be seen by choosing some node $\mathbf{x} \not\in \mathscr{W}$. Node $\mathbf{x}$ differs from any node in $\mathscr{W}$ in at least one position $j\in [n-l]$. Hence, the row corresponding to $\mathbf{x}$ in $P^{T}$ will have as support, those columns where none of the nodes in $\mathscr{W}$ have a 1 entry, thereby implying that there does not exist a path from $\mathbf{x}$ to any of the nodes in $\mathscr{W}$. Observe that the number of nodes in $\mathscr{H}_j$ is $s^{l}$. 

Since the coordinates of nodes $\mathbf{w}$ in $\mathscr{H}_j$ in positions $i\in [n-l]$ are fixed, we puncture the vectors $\mathbf{w}$ at these positions, to form the vectors $\mathbf{w}'$. Thus, the sets $\mathscr{S}^{n}_{i\leftarrow *}$, $i\in [n-l+1:n]$ can be written as the sets $\mathscr{S}^{l}_{i\leftarrow *}$, $i\in [l]$, each set now containing vectors $\mathbf{w}'$. 

The incidence matrix $V_{\mathscr{H}_j}$ has entries
\begin{equation}\label{vhj}
    {[V_{\mathscr{H}_j}]}_{\mathbf{r},\mathbf{t}} = \begin{cases}
    1,\text{ if }\textbf{t}_{\setminus i} = \textbf{r}_{\setminus i},\\
    0,\text{ otherwise}
    \end{cases}
\end{equation}
where $\mathbf{r}\in \mathscr{S}^{l}$ and $\mathbf{t} \in \mathscr{S}^{l}_{i\leftarrow *}$, $i\in [l]$. 

$\mathscr{H}_j$ is precisely the hypergraph $\mathscr{G}_{s,l}$, having $h_j = ls^{l-1}$ edges. Since this is true for any $j\in [t]$, substituting in equation \eqref{conn}, we get that the number of connected components in $\mathscr{H}$ equals $s^{n-l}$.
\end{proof}

We shall now introduce an $(s-1)^{l}$-dimensional code $\mathscr{C}^{\otimes l}$, of block length $s^{l}$, the parity check matrix of which will aid us in obtaining a handle on the rank of $V_{\mathscr{G}_{s,l}}$.

Consider the single parity check code $\mathscr{C}_{s}$ of block length $s$ over the field $\mathbb{F}$, having the $(s-1)\times s$ generator matrix $G_{s}$ given by
\begin{equation*}
    G_{s} = \begin{bmatrix}
        1 & 0 & \cdots & 0 & -1\\
        0 & 1 & \cdots & 0 & -1\\
        \vdots & \vdots &\ddots &\vdots &\vdots\\
        0 & 0 & \cdots & 1 & -1
    \end{bmatrix}.
\end{equation*}

The parity check matrix, $H_{s}$, of $\mathscr{C}_{s}$ is simply the $1\times s$ all-ones matrix. We then define the direct product code $\mathscr{C}_{s}^{\otimes l}$ (where $\mathscr{C}_{s}$ is the underlying code), as the code generated by
\begin{align*}
    G_{s}^{\otimes l} &= \underbrace{G_s \otimes G_s \otimes \cdots \otimes G_s}  \\
    & \ \ \ \ \ \ \ \ \ \ \ \ l \text{ times}
\end{align*}
where $\otimes$ denotes the Kronecker product.
A codeword in $\mathscr{C}_{s}^{\otimes l}$ is of length $s^l$ and can be described by an $l$-dimensional array. Each entry of the array (which is a coordinate of the codeword) is indexed by an $s$-ary $l$-tuple $\mathbf{v} = (v_1,v_2,\ldots,v_l)\in \mathscr{S}^{l}$. 

The code $\mathscr{C}_{s}^{\otimes l}$ has the property that each array element $\mathbf{v}$ is involved in $l$ parity check equations, one along each coordinate $i\in [l]$. In other words, for every symbol $\mathbf{v}\in \mathscr{S}^{l}$, there exists a parity check equation indexed by a vector $\mathbf{b}\in S^{l}_{i\leftarrow *}$, $i\in [l]$, such that $\textbf{b}_{\setminus i} = \textbf{v}_{\setminus i}$. Moreover, the parity check equation along coordinate $j$ is the sum of those codeword symbols that differ from $\mathbf{v}$ in only their $j^{th}$ coordinate.

Formally, the code $\mathscr{C}_{s}^{\otimes l}$ can be described by the $ls^{l-1} \times s^l$ parity check matrix $H$ having entries
\begin{equation}\label{H}
    H_{\mathbf{r},\mathbf{t}} = \begin{cases}
    1,\text{ if }\textbf{t}_{\setminus i} = \textbf{r}_{\setminus i},\\
    0,\text{ otherwise}
    \end{cases}
\end{equation}
where $\mathbf{r}\in S^{l}_{i\leftarrow *}$, $i\in [l]$ and $\mathbf{t}\in \mathscr{S}^{l}$. Thus, each row of $H$ corresponds to a parity check equation $\mathbf{b}$, that finds the sum of symbols $\mathbf{v}\in \mathscr{S}^{l}$, which differ only in that coordinate of the $l$-tuple, $i$, in which $b_i=*$.

\begin{lemma}\label{kro}
The parity check matrix $H$ of $\mathscr{C}_{s}^{\otimes l}$ is equal to $V_{\mathscr{G}_{s,l}}^{T}$. Further, rank$(V_{\mathscr{G}_{s,l}}^{T})=s^{l}-(s-1)^{l}$.
\end{lemma}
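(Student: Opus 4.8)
The plan is to prove the two assertions in turn: first the matrix identity $H=V_{\mathscr{G}_{s,l}}^{T}$, and then the rank formula, which I will extract from the coding-theoretic structure of $\mathscr{C}_s^{\otimes l}$.

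First I would establish $H=V_{\mathscr{G}_{s,l}}^{T}$ by directly comparing the two entry descriptions. Specializing \eqref{V} to $n=l$, the incidence matrix $V_{\mathscr{G}_{s,l}}$ has its rows indexed by $\mathbf{r}\in\mathscr{S}^{l}$ and its columns indexed by $\mathbf{t}\in\mathscr{S}^{l}_{i\leftarrow *}$ ($i$ running from $l$ down to $1$), with $[V_{\mathscr{G}_{s,l}}]_{\mathbf{r},\mathbf{t}}=1$ exactly when $\mathbf{r}_{\setminus i}=\mathbf{t}_{\setminus i}$ (the clause $t_i=*$ being automatic once $\mathbf{t}\in\mathscr{S}^{l}_{i\leftarrow *}$). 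Transposing interchanges the two index sets, so $V_{\mathscr{G}_{s,l}}^{T}$ has rows indexed by $\mathbf{r}\in\mathscr{S}^{l}_{i\leftarrow *}$ and columns by $\mathbf{t}\in\mathscr{S}^{l}$, with a $1$ precisely when $\mathbf{r}_{\setminus i}=\mathbf{t}_{\setminus i}$. This matches the description of $H$ in \eqref{H} index set for index set and entry for entry; since both matrices list rows and columns in the same lexicographic, coordinate-by-coordinate order, the two coincide.

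For the rank, I would read the rows of $H$ as the coordinate-wise single-parity checks of the tensor-product code. The row indexed by $\mathbf{b}\in\mathscr{S}^{l}_{i\leftarrow *}$ carries a $1$ in exactly the $s$ columns $\mathbf{t}$ with $\mathbf{t}_{\setminus i}=\mathbf{b}_{\setminus i}$, i.e.\ it sums the $s$ array entries lying on the axis-$i$ line through the fixed coordinates $\mathbf{b}_{\setminus i}$. Thus $Hc^{T}=0$ says exactly that every axis-$i$ line of the $l$-dimensional array $c$ sums to zero, for all $i\in[l]$. The two facts I need are: (a) $\dim\mathscr{C}_s^{\otimes l}=(s-1)^{l}$, which holds because $G_s$ is a full-row-rank $(s-1)\times s$ matrix and rank is multiplicative under the Kronecker product, so $\mathrm{rank}(G_s^{\otimes l})=(s-1)^{l}$; and (b) $\ker H=\mathscr{C}_s^{\otimes l}$, i.e.\ $H$ is a genuine parity check matrix for $\mathscr{C}_s^{\otimes l}$. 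Granting both, $\mathrm{rank}(H)=s^{l}-\dim\ker H=s^{l}-(s-1)^{l}$, and combined with $H=V_{\mathscr{G}_{s,l}}^{T}$ this gives the claim.

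The inclusion $\mathscr{C}_s^{\otimes l}\subseteq\ker H$ is the easy half: each generator $g_{1}\otimes\cdots\otimes g_{l}$ has every factor $g_{m}$ a row of $G_s$, hence a codeword of the single-parity-check code $\mathscr{C}_s$ whose entries sum to zero, so the tensor satisfies all the line-sum checks encoded by the rows of $H$. The main obstacle is the reverse inclusion $\ker H\subseteq\mathscr{C}_s^{\otimes l}$, equivalently the lower bound $\mathrm{rank}(H)\ge s^{l}-(s-1)^{l}$: one must show that the $ls^{l-1}$ line-sum constraints are rich enough to cut $\mathbb{F}^{s^{l}}$ down to exactly $\mathscr{C}_s^{\otimes l}$. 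I would dispatch this using the standard characterization of tensor-product codes — an array lies in $\mathscr{C}_s^{\otimes l}$ iff it satisfies the component code $\mathscr{C}_s$ along each of the $l$ axes — or, equivalently, by invoking the iterated dual identity $(\mathscr{C}_s^{\otimes l})^{\perp}=\sum_{i=1}^{l}\mathbb{F}^{s}\otimes\cdots\otimes\mathscr{C}_s^{\perp}\otimes\cdots\otimes\mathbb{F}^{s}$ (with $\mathscr{C}_s^{\perp}$ in the $i$-th slot), whose right-hand side is precisely the span of the rows of $H$ because $H_s$ spans $\mathscr{C}_s^{\perp}$. Either route forces the row space of $H$ to equal $(\mathscr{C}_s^{\otimes l})^{\perp}$, whence $\mathrm{rank}(H)=\dim(\mathscr{C}_s^{\otimes l})^{\perp}=s^{l}-(s-1)^{l}$.
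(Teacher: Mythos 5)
Your proposal is correct and follows essentially the same route as the paper: both establish $H = V_{\mathscr{G}_{s,l}}^{T}$ by entry-by-entry comparison of \eqref{V} and \eqref{H}, and both obtain the rank from $\mathrm{rank}(G_s^{\otimes l}) = \mathrm{rank}(G_s)^{l} = (s-1)^{l}$ via multiplicativity of rank under the Kronecker product. The only difference is that you explicitly verify $\ker H = \mathscr{C}_s^{\otimes l}$ (so that $\mathrm{rank}(H) = s^{l} - (s-1)^{l}$ genuinely follows), a point the paper takes for granted when it asserts in its setup that $H$ is \emph{the} parity check matrix of $\mathscr{C}_s^{\otimes l}$.
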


\begin{proof}
The first part of the lemma is obvious from equations \eqref{V} and \eqref{H}. Further, rank$(G_{s}^{\otimes l}) = \prod_{i=1}^{l}\text{rank}(G_s) = \text{rank}(G_{s})^{l} = (s-1)^{l}$. Thus, the rank of the parity check matrix $H$ is equal to $s^{l}-(s-1)^{l}$.
\end{proof}

Using Lemmas \ref{extra} and \ref{kro}, we shall now prove Theorem \ref{s}.

\begin{proof}
Recall from Lemma \ref{extra} that for any node $\mathbf{v}\in \mathscr{S}^{n}$ of the hypergraph $\mathscr{H}$, the connected component $\mathscr{H}_j$ containing $\mathbf{v}$ consists of nodes in the set $\mathscr{W} = \{\mathbf{w}\in \mathscr{S}^{n}:\text{ }w_i=v_i,\text{ }i\in [n-l]\text{ and } w_j\in [0:s-1], j\in [n-l+1:n]\}$. 

It is now possible to permute the rows $\mathbf{v}$ of $P^{T}$ in lexicographic order of $\mathbf{v}'=(v_{n-l},\ldots,v_1)$ so that all the rows $\mathbf{v}\in \mathscr{S}^{n}$ corresponding to a fixed value of $\mathbf{v}'$ occur together. Thus, the first $s^l$ rows of $P^{T}$ are indexed by vectors $\mathbf{v}$ such that $\mathbf{v}' = (0,0,\ldots,0)$, the next $s^{l}$ rows are indexed by vectors $\mathbf{v}$ with $\mathbf{v}' = (0,0,\ldots,1)$ and so on. Each collection of $s^{l}$ rows corresponding to a particular value of $\mathbf{v} = (v_{n-l},\ldots,v_1)$ forms the incidence matrix of a connected component.

From \eqref{vhj}, we observe that the supports of the rows corresponding to any two connected components $\mathscr{H}_i$ and $\mathscr{H}_j$, $i\neq j$, are disjoint. Hence, the rank of $P^{T}$ is equal to the sum of the ranks of the incidence matrices of the connected components of hypergraph $\mathscr{H}$, induced by $P^{T}$. Since each connected component is precisely the hypergraph $\mathscr{G}_{s,l}$ (from the proof of Lemma \ref{extra}), we get that
\begin{equation*}
    \text{rank}(P^{T}) = s^{n-l}(\text{rank}(V_{\mathscr{G}_{s,l}})) = s^{n-l}(s^{l}-(s-1)^{l}),
\end{equation*}
where the first equality follows from Lemma \ref{extra} and the second from Lemma \ref{kro}.
\end{proof}

\section*{Acknowledgements} N.\ Kashyap's work was supported by a Swarnajayanti Fellowship awarded by the Department of Science and Technology, Government of India.

\bibliographystyle{plain}
{\footnotesize
\bibliography{references}}
\end{document}